\newtheorem{theorem}{Theorem}
\def\ci{\!\perp\!}
\newcommand{\comments}[1]{}
\tikzset{tt/.style={decoration={
  markings,
  mark=at position .485 with {\arrow{>}},
  mark=at position .515 with {\arrow{<}}},postaction={decorate}}}
\begin{document}

\title[]{Simple yet Sharp Sensitivity Analysis for Any Contrast Under Unmeasured Confounding}

\author{Jose M. Pe\~{n}a$^1$}
\address{$^1$Link\"oping University, Sweden.}
\email{jose.m.pena@liu.se}


\begin{abstract}
We extend our previous work on sensitivity analysis for the risk ratio and difference contrasts under unmeasured confounding to any contrast. We prove that the bounds produced are still arbitrarily sharp, i.e. practically attainable. We illustrate the usability of the bounds with real data.
\end{abstract}

\maketitle

\section{Introduction}

When the estimation of the causal effect of a treatment on an outcome in an observational study is hindered by the presence of unmeasured confounding, one may resort to bounding it instead. This solution goes under the name of sensitivity analysis. The bounds are typically functions of some sensitivity parameters whose values are provided by the analyst. The sensitivity parameters usually represent the association of the unmeasured confounders with the exposure and outcome.

The sensitivity analysis method in the work \cite{DingandVanderWeele2016a}, hereafter DV, has received considerable attention in the literature. For instance, see the survey work \cite{Blumetal.2020} and the follow-up works \cite{DingandVanderWeele2016b,VanderWeeleandDing2017,VanderWeeleetal.2019,Sjolander2020,Sjolander2024}. Unfortunately, the work \cite{Sjolander2020} shows that the DV bounds are not sharp or attainable (i.e., logically possible). Moreover, the DV bounds only apply to the risk ratio and difference. The work \cite{Sjolander2024}, hereinafter AS, solves these two problems by deriving arbitrarily sharp (i.e., practically attainable) bounds for any contrast between the probabilities of the counterfactual outcome under exposure and non-exposure. Unlike the DV work that directly derives bounds for the contrast, the AS work first derives bounds for the counterfactual probabilities, which are then combined to obtain bounds for the contrast. Ensuring that the former are arbitrarily sharp ensures that the latter are also so. The AS bounds are based on the same sensitivity parameters as the DV bounds.

Our previous work \cite{Penna2020}, hereinafter JP, proposes an alternative to the DV and AS methods based on a different set of sensitivity parameters, which arguably the analyst may sometimes find easier to specify. Like the AS bounds and unlike the DV bounds, the JP bounds are arbitrarily sharp. However, like the DV bounds and unlike the AS bounds, the JP bounds only apply to the risk ratio and difference. In this note, we extend the JP bounds to any contrast while retaining their arbitrarily sharpness. To do so, we follow the AS approach described above.

The rest of the paper is organized as follows. First, we derive our bounds for any contrast and prove that they are arbitrarily sharp. Then, we illustrate their usability with some real and highlight some differences between them and the AS bounds. Finally, we close with some discussion.

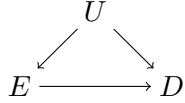
\begin{figure}[t]
\centering
\begin{tikzpicture}[inner sep=1mm]
\node at (0,0) (E) {$E$};
\node at (2,0) (D) {$D$};
\node at (1,1) (U) {$U$};
\path[->] (E) edge (D);
\path[->] (U) edge (E);
\path[->] (U) edge (D);
\end{tikzpicture}
\caption{Causal graph where $U$ is unmeasured.}\label{fig:graph}
\end{figure}

\section{Sharp Bounds}

Since our work is a follow-up of the JP work, we start by recalling some notation and derivations in that work. Consider the causal graph in Figure \ref{fig:graph}, where $E$ denotes the exposure, $D$ denotes the outcome, and $U$ denotes the set of unmeasured confounders. Let $E$ and $D$ be binary random variables. For simplicity, we assume that $U$ is a categorical random vector, but our results also hold for ordinal and continuous confounders. For simplicity, we treat $U$ as a categorical random variable whose levels are the Cartesian product of the levels of the components of the original $U$. We use upper-case letters to denote random variables, and the same letters in lower-case to denote their values. 

The causal graph in Figure \ref{fig:graph} represents a non-parametric structural equation model with independent errors, which defines a joint probability distribution $p(D,E,U)$. We make the usual positivity assumption that if $p(U=u) > 0$ then $p(E=e|U=u) > 0$, i.e., $E$ is not a deterministic function of $U$, and thus every individual in the subpopulations defined by the confounder can possibly be exposed or not \cite{HernanandRobins2020}.

Let $D_e$ denote the counterfactual outcome when the exposure is set to level $E=e$. Note that
\begin{align}\label{eq:D1}\nonumber
p(D_e=1) & = p(D_e=1 | E=e) p(E=e) + p(D_e=1 | E=1-e) p(E=1-e)\\
& = p(D=1 | E=e) p(E=e) + p(D_e=1 | E=1-e) p(E=1-e)
\end{align}
where the second equality follows from counterfactual consistency, i.e., $E=e \Rightarrow D_e = D$. We bound the counterfactual probability $p(D_e=1 | E=1-e)$ in terms of $p(D|E,U)$. Specifically,
\begin{align}\label{eq:D1E0}\nonumber
p(D_e=1 | E=1-e) & = \sum_u p(D_e=1 | E=1-e, U=u) p(U=u | E=1-e)\\
& = \sum_u p(D=1 | E=e, U=u) p(U=u | E=1-e)
\end{align}
where the second equality follows from $D_e \ci E | U$ for all $e$ in the causal graph in Figure \ref{fig:graph}, and counterfactual consistency. Thus,
\[
\min_{e,u} p(D=1 | E=e, U=u) \leq p(D_e=1 | E=1-e) \leq \max_{e,u} p(D=1 | E=e, U=u).
\]

Let us define two sensitivity parameters whose values the analyst has to specify:
\[
M=\max_{e,u} p(D=1 | E=e, U=u)
\]
and
\[
m=\min_{e,u} p(D=1 | E=e, U=u).
\]
By definition, these parameters' values must lie in the interval $[0,1]$ and $M \geq m$. The observed data distribution constrains the valid values further. To see it, note that
\[
p(D=1 | E=e) = \sum_u p(D=1 | E=e, U=u) p(U=u | E=e) \leq M
\]
for all $e$, and likewise
\[
p(D=1 | E=e) \geq m.
\]
Let us define
\[
M^*=\max_{e} p(D=1 | E=e)
\]
and
\[
m^*=\min_{e} p(D=1 | E=e).
\]
Then,
\[
M^* \leq M
\]
and
\[
m^* \geq m.
\]
We can thus define the feasible region for $M$ and $m$ as $M^* \leq M \leq 1$ and $0 \leq m \leq m^*$.

Incorporating the sensitivity parameters $M$ and $m$ in Equation \ref{eq:D1} leads to the following bounds of the counterfactual probability $p(D_e=1)$:
\begin{align}\nonumber
p(D=1, E=e) + p(E=1-e) m &\leq p(D_e=1)\\\label{eq:D1bound}
& \leq p(D=1, E=e) + p(E=1-e) M.
\end{align}
Note that setting the sensitivity parameters to their non-informative values $M=1$ and $m=0$ recovers the assumption-free bounds in the work \cite{Robins1989}.

At this point, our work departs from the JP work by adapting the AS approach for the DV sensitivity parameters to the JP sensitivity parameters. Concretely, we can obtain a lower (resp. upper) bound for any contrast between $p(D_1=1)$ and $p(D_0=1)$ by contrasting the lower (resp. upper) bound for $p(D_1=1)$ and the upper (resp. lower) bound for $p(D_0=1)$ in Equation \ref{eq:D1bound}. For instance, we can obtain bounds for the risk ratio, risk difference, odds ratio, odds difference, etc. It follows from the theorems below that these bounds are arbitrarily sharp for any contrast. This is therefore an extension of the JP result, which only applies to the risk ratio and difference.

\begin{theorem}\label{the:attainable1}
The lower bound for $p(D_1=1)$ and the upper bound for $p(D_0=1)$ in Equation \ref{eq:D1bound} are simultaneously arbitrarily sharp.
\end{theorem}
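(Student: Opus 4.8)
The plan is to exhibit, for each small $\epsilon>0$, a full distribution $p_\epsilon(D,E,U)$ compatible with the causal graph in Figure \ref{fig:graph} that (i) reproduces the observed margin $p(D,E)$ exactly, (ii) has sensitivity parameters equal to the prescribed $M$ and $m$, and (iii) as $\epsilon\to0$ drives $p(D_1{=}1)$ to its lower bound and $p(D_0{=}1)$ to its upper bound \emph{simultaneously}. Since positivity holds for every $\epsilon>0$, this shows both bounds are approached arbitrarily closely by genuinely admissible models, i.e.\ they are simultaneously arbitrarily sharp.

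First I would read off from Equations \ref{eq:D1} and \ref{eq:D1E0} what attainment requires. The lower bound for $p(D_1{=}1)$ is attained when $p(D_1{=}1\mid E{=}0)=\sum_u p(D{=}1\mid E{=}1,U{=}u)\,p(U{=}u\mid E{=}0)=m$, which, since $m$ is the minimum, forces $p(D{=}1\mid E{=}1,U{=}u)=m$ on the support of $p(U\mid E{=}0)$. Symmetrically, the upper bound for $p(D_0{=}1)$ is attained when $p(D{=}1\mid E{=}0,U{=}u)=M$ on the support of $p(U\mid E{=}1)$. These two requirements concern disjoint cells of the outcome model, so the idea is to make the two exposure subpopulations occupy almost disjoint regions of $U$.

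Concretely, I would take $U$ binary with levels $u_0,u_1$ and set $p(E{=}1\mid U{=}u_1)=1-\epsilon$ and $p(E{=}1\mid U{=}u_0)=\epsilon$, so that $E$ is almost a deterministic function of $U$ and, as $\epsilon\to0$, the unexposed are essentially all of type $u_0$ and the exposed essentially all of type $u_1$. For the outcome model I would fix the two counterfactual-relevant cells at the extremes, $p(D{=}1\mid E{=}1,U{=}u_0)=m$ and $p(D{=}1\mid E{=}0,U{=}u_1)=M$; this both drives the two bounds and guarantees that the global minimum and maximum of $p(D{=}1\mid E,U)$ equal $m$ and $M$, so the sensitivity parameters come out right. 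The two observed-relevant cells $p(D{=}1\mid E{=}1,U{=}u_1)=a$ and $p(D{=}1\mid E{=}0,U{=}u_0)=b$, together with $p(U{=}u_1)$, are left free and pinned down by the three constraints $p(E{=}1)$, $p(D{=}1,E{=}1)$, $p(D{=}1,E{=}0)$; in the limit $a\to p(D{=}1\mid E{=}1)$ and $b\to p(D{=}1\mid E{=}0)$, both lying in $[m^*,M^*]\subseteq[m,M]$. Since then $p(U{=}u_0\mid E{=}0)\to1$ and $p(U{=}u_1\mid E{=}1)\to1$, we get $p(D_1{=}1\mid E{=}0)\to m$ and $p(D_0{=}1\mid E{=}1)\to M$, and substituting into Equation \ref{eq:D1} recovers exactly the lower bound for $p(D_1{=}1)$ and the upper bound for $p(D_0{=}1)$ of Equation \ref{eq:D1bound}.

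I expect the main obstacle to be the simultaneous bookkeeping rather than any single hard estimate: I must verify that the three moment equations admit an admissible solution (all probabilities in $[0,1]$, all conditional outcome probabilities in $[m,M]$) for every sufficiently small $\epsilon$, which I would settle by a continuity / implicit-function argument around the explicit limiting values. The reason the two bounds can be hit at once---the crux of the ``simultaneous'' claim---is precisely that the near-deterministic assignment decouples the exposed and unexposed subpopulations: the cell $(E{=}1,u_0)$ controlling the lower bound for $p(D_1{=}1)$ and the cell $(E{=}0,u_1)$ controlling the upper bound for $p(D_0{=}1)$ are distinct, and neither is the cell constrained by the observed data within its own subpopulation, so no conflict arises. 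A mild edge case is $M=M^*$ or $m=m^*$, where the interior correction of order $\epsilon$ could threaten to push $a$ or $b$ outside $[m,M]$; this is absorbed by taking the limit and claiming only arbitrary, not exact, sharpness.
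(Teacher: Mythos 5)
Your construction is correct and is essentially the paper's own proof: a binary $U$ with near-deterministic exposure assignment, the two ``cross'' cells $(E{=}1,U{=}u_0)$ and $(E{=}0,U{=}u_1)$ pinned at $m$ and $M$, the two ``diagonal'' cells near the observed conditionals, so that $p(D_1{=}1|E{=}0)\to m$ and $p(D_0{=}1|E{=}1)\to M$ simultaneously as $\epsilon\to 0$. The only difference is bookkeeping: you match $p(D,E)$ exactly and let the realized sensitivity parameters drift by $O(\epsilon)$ in the edge cases $M=M^*$ or $m=m^*$, whereas the paper fixes all four outcome cells (hence $M$ and $m$) exactly and lets the reconstructed $p(D{=}1|E{=}e)$ deviate from the observed values by $O(\epsilon)$ --- both variants meet the paper's definition of arbitrarily sharp.
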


\begin{proof}
Let the set $\{M', m', p'(D,E)\}$ represent the sensitivity parameter values and the observed data distribution at hand. We assume that $M'$ and $m'$ belong to the feasible region. To show that the lower bound for $p(D_1=1)$ in Equation \ref{eq:D1bound} is arbitrarily sharp, we construct a distribution $p(D,E,U)$ that marginalizes to the set $\{M, m, p(D,E)\}$ such that (i) $\{M, m, p(D,E)\}$ and $\{M', m', p'(D,E)\}$ are arbitrarily close, and (ii) the lower bound and $p(D_1=1)$ are arbitrarily close.

Specifically,
\begin{itemize}
    \item let $p(E)=p'(E)$,
    \item let 
\begin{align*}
p(D=1|E=1,U=1)&=p'(D=1|E=1),\\
p(D=1|E=1,U=0)&=m',\\
p(D=1|E=0,U=1)&=M',\\
p(D=1|E=0,U=0)&=p'(D=1|E=0), \text{ and}
\end{align*}   
    \item let $U$ be binary with $p(U=1|E=1)=p(U=0|E=0)=1-\epsilon$ where $\epsilon$ is an arbitrary number such that $0<\epsilon<1$. The purpose of $\epsilon$ is to ensure that the positivity assumption holds.
\end{itemize}

Note that $M' \geq \max_e p'(D=1|E=e)$ and $m' \leq \min_e p'(D=1|E=e)$, because $M'$ and $m'$ belong to the feasible region. Then, $M=M'$ and $m=m'$.

Note also that
\begin{align*}
p(D=1|E=1) &= \sum_u p(D=1|E=1,U=u) p(U=u|E=1)\\
&=\epsilon \: m' + (1-\epsilon) \: p'(D=1|E=1)
\end{align*}
and thus $p(D=1|E=1)$ can be made arbitrarily close to $p'(D=1|E=1)$ by choosing $\epsilon$ sufficiently close to 0. Likewise for $p(D=1|E=0)$ and $p'(D=1|E=0)$.

Finally, recall from Equation \ref{eq:D1E0} that
\begin{align*}
p(D_1=1 | E=0) & = \sum_u p(D=1 | E=1, U=u) p(U=u | E=0)\\
& = (1-\epsilon) m' + \epsilon p'(D=1|E=1)
\end{align*}
which implies that $p(D_1=1 | E=0)$ can be made arbitrarily close to $m'$ and thus to $m$ by choosing $\epsilon$ sufficiently close to 0. Therefore, the lower bound for $p(D_1=1)$ in Equation \ref{eq:D1bound} can be made arbitrary close to $p(D_1=1)$ by Equation \ref{eq:D1}. That the upper bound for $p(D_0=1)$ in Equation \ref{eq:D1bound} is arbitrarily sharp can be proven analogously. 
\end{proof}

\begin{theorem}\label{the:attainable2}
The upper bound for $p(D_1=1)$ and the lower bound for $p(D_0=1)$ in Equation \ref{eq:D1bound} are simultaneously arbitrarily sharp.
\end{theorem}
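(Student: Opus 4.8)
The plan is to mirror the argument in the proof of Theorem~\ref{the:attainable1}, exchanging the roles of $M'$ and $m'$ in the two cross terms so that it is now the upper bound for $p(D_1=1)$ and the lower bound for $p(D_0=1)$ that are attained in the limit. Given the set $\{M', m', p'(D,E)\}$ with $M'$ and $m'$ in the feasible region, I would again take $p(E)=p'(E)$ and let $U$ be binary with $p(U=1|E=1)=p(U=0|E=0)=1-\epsilon$, but set
\begin{align*}
p(D=1|E=1,U=1)&=p'(D=1|E=1),\\
p(D=1|E=1,U=0)&=M',\\
p(D=1|E=0,U=1)&=m',\\
p(D=1|E=0,U=0)&=p'(D=1|E=0).
\end{align*}
The only change relative to Theorem~\ref{the:attainable1} is that $M'$ and $m'$ have traded places in the row $E=1,U=0$ and the row $E=0,U=1$.

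First I would verify that this distribution marginalizes to $\{M,m,p(D,E)\}$ with $M=M'$ and $m=m'$. Since $M'$ and $m'$ lie in the feasible region, $M'\geq\max_e p'(D=1|E=e)$ and $m'\leq\min_e p'(D=1|E=e)$, so among the four conditional probabilities above the largest is $M'$ and the smallest is $m'$; hence $M=M'$ and $m=m'$. Computing $p(D=1|E=1)=(1-\epsilon)\,p'(D=1|E=1)+\epsilon\,M'$ and the analogous expression for $E=0$ then shows, exactly as in Theorem~\ref{the:attainable1}, that $p(D,E)$ is arbitrarily close to $p'(D,E)$ for $\epsilon$ near $0$, giving part (i).

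For part (ii) I would apply Equation~\ref{eq:D1E0} to the two relevant counterfactual probabilities. The effect of swapping $M'$ and $m'$ is precisely to move the dominant $(1-\epsilon)$ weight onto $M'$ for $p(D_1=1|E=0)$ and onto $m'$ for $p(D_0=1|E=1)$, yielding
\begin{align*}
p(D_1=1|E=0)&=(1-\epsilon)\,M'+\epsilon\,p'(D=1|E=1),\\
p(D_0=1|E=1)&=(1-\epsilon)\,m'+\epsilon\,p'(D=1|E=0).
\end{align*}
As $\epsilon\to 0$ these approach $M'=M$ and $m'=m$ respectively, so by Equation~\ref{eq:D1} the upper bound for $p(D_1=1)$ and the lower bound for $p(D_0=1)$ are simultaneously attained in the limit.

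I do not anticipate a genuine obstacle, because the statement is the exact dual of Theorem~\ref{the:attainable1} and a single construction drives both bounds at once. The one point deserving care is checking that exchanging $M'$ and $m'$ leaves the identities $M=M'$ and $m=m'$ intact; this is exactly where feasibility of the sensitivity parameters is needed, and it goes through unchanged since feasibility forces every observed conditional probability to lie between $m'$ and $M'$.
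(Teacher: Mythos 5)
Your proposal is correct and matches the paper's proof exactly: the paper proves Theorem~\ref{the:attainable2} by declaring it analogous to Theorem~\ref{the:attainable1} with precisely the swap you make, namely $p(D=1|E=1,U=0)=M'$ and $p(D=1|E=0,U=1)=m'$. Your write-up simply fills in the details that the paper leaves implicit, and all the computations (the identities $M=M'$, $m=m'$, and the limits of $p(D_1=1|E=0)$ and $p(D_0=1|E=1)$ as $\epsilon\to 0$) are accurate.
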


\begin{proof}
The proof is analogous to that of Theorem \ref{the:attainable1} if we set $p(D=1|E=1,U=0)=M'$ and $p(D=1|E=0,U=1)=m'$ instead.
\end{proof}

\section{Example}

In this section, we illustrate the usability of our sensitivity analysis method with the real data in the AS work. The data originate from an observational study to estimate the causal effect of vitamin D insufficiency ($E=1$) on urine incontinence ($D=1$) in pregnant women \cite{Stafneetal.2020}. In particular, $p(E=1)=0.27$, $p(D=1|E=0)=0.38$ and $p(D=1|E=1)=0.49$.

\begin{table}[t]
\caption{Lower and upper bounds of the risk ratio as a function of the sensitivity parameters $M$ and $m$.}\label{tab:RR}
\centering
\footnotesize
\begin{tabular}{c|l|ccccc|}
\multicolumn{2}{c}{} & \multicolumn{5}{c}{$M$}\\\cline{3-7}
\multicolumn{2}{c|}{} & 0.49 & 0.62 & 0.75 & 0.87 & 1\\\cline{2-7}
\multirow{5}{*}{$m$} & 0.38 & (1.00,1.29) & (0.92,1.53) & (0.86,1.78) & (0.80,2.02) & (0.75,2.27)\\
& 0.29 & (0.83,1.38) & (0.77,1.65) & (0.71,1.91) & (0.66,2.17) & (0.62,2.43)\\
& 0.19 & (0.66,1.49) & (0.61,1.77) & (0.57,2.06) & (0.53,2.34) & (0.50,2.62)\\ 
& 0.1 & (0.49,1.62) & (0.45,1.92) & (0.42,2.23) & (0.39,2.54) & (0.37,2.85)\\
& 0 & (0.32,1.77) & (0.30,2.10) & (0.28,2.44) & (0.26,2.77) & (0.24,3.11)\\\cline{2-7}
\end{tabular}
\end{table}

\begin{table}[t]
\caption{Lower and upper bounds of the risk difference as a function of the sensitivity parameters $M$ and $m$.}\label{tab:RD}
\centering
\footnotesize
\begin{tabular}{c|l|ccccc|}
\multicolumn{2}{c}{} & \multicolumn{5}{c}{$M$}\\\cline{3-7}
\multicolumn{2}{c|}{} & 0.49 & 0.62 & 0.75 & 0.87 & 1\\\cline{2-7}
\multirow{5}{*}{$m$} & 0.38 & (0.00,0.11) & (-0.03,0.20) & (-0.07,0.30) & (-0.10,0.39) & (-0.14,0.48)\\
& 0.29 & (-0.07,0.14) & (-0.10,0.23) & (-0.14,0.32) & (-0.17,0.41) & (-0.21,0.51)\\
& 0.19 & (-0.14,0.16) & (-0.17,0.25) & (-0.21,0.35) & (-0.24,0.44) & (-0.28,0.53)\\
& 0.1 & (-0.21,0.19) & (-0.24,0.28) & (-0.28,0.37) & (-0.31,0.47) & (-0.35,0.56)\\
& 0 & (-0.28,0.21) & (-0.31,0.31) & (-0.35,0.40) & (-0.38,0.49) & (-0.42,0.58)\\\cline{2-7}
\end{tabular}
\end{table}

\begin{table}[t]
\caption{Lower and upper bounds of the odss ratio as a function of the sensitivity parameters $M$ and $m$.}\label{tab:OR}
\centering
\footnotesize
\begin{tabular}{c|l|ccccc|}
\multicolumn{2}{c}{} & \multicolumn{5}{c}{$M$}\\\cline{3-7}
\multicolumn{2}{c|}{} & 0.49 & 0.62 & 0.75 & 0.87 & 1\\\cline{2-7}
\multirow{5}{*}{$m$} & 0.38 & (1.00,1.57) & (0.87,2.28) & (0.76,3.41) & (0.66,5.44) & (0.57,10.22)\\
& 0.29 & (0.74,1.75) & (0.65,2.55) & (0.56,3.80) & (0.49,6.07) & (0.43,11.41)\\
& 0.19 & (0.54,1.96) & (0.47,2.86) & (0.41,4.26) & (0.35,6.81) & (0.31,12.79)\\
& 0.1 & (0.36,2.21) & (0.32,3.22) & (0.28,4.80) & (0.24,7.67) & (0.21,14.40)\\ 
& 0 & (0.22,2.50) & (0.19,3.64) & (0.17,5.44) & (0.14,8.68) & (0.13,16.31)\\\cline{2-7}
\end{tabular}
\end{table}

\begin{table}[t]
\caption{Lower and upper bounds of the odds difference as a function of the sensitivity parameters $M$ and $m$.}\label{tab:OD}
\centering
\footnotesize
\begin{tabular}{c|l|ccccc|}
\multicolumn{2}{c}{} & \multicolumn{5}{c}{$M$}\\\cline{3-7}
\multicolumn{2}{c|}{} & 0.49 & 0.62 & 0.75 & 0.87 & 1\\\cline{2-7}
\multirow{5}{*}{$m$} & 0.38 & (0.00,0.35) & (-0.10,0.79) & (-0.22,1.47) & (-0.36,2.72) & (-0.52,5.65)\\ 
& 0.29 & (-0.18,0.41) & (-0.28,0.85) & (-0.40,1.54) & (-0.54,2.78) & (-0.69,5.71)\\ 
& 0.19 & (-0.32,0.47) & (-0.43,0.91) & (-0.55,1.60) & (-0.68,2.84) & (-0.84,5.77)\\ 
& 0.1 & (-0.44,0.53) & (-0.55,0.96) & (-0.67,1.65) & (-0.80,2.90) & (-0.96,5.83)\\ 
& 0 & (-0.54,0.58) & (-0.65,1.01) & (-0.77,1.70) & (-0.90,2.95) & (-1.06,5.88)\\\cline{2-7}
\end{tabular}
\end{table}

Tables \ref{tab:RR}-\ref{tab:OD} show the lower and upper bounds for the risk ratio, risk difference, odds ratio and odds difference as a function of the sensitivity parameters $M$ and $m$. For instance, each entry of Table \ref{tab:OD} is of the form $(LB,UB)$ with
\[
LB=\frac{LB_1}{1-LB_1}-\frac{UB_0}{1-UB_0}
\]
and
\[
UB=\frac{UB_1}{1-UB_1}-\frac{LB_0}{1-LB_0}
\]
where $LB_e$ and $UB_e$ denote the lower and upper bounds of $p(D_e=1)$ in Equation \ref{eq:D1bound}. The R code for our analysis is available \href{https://www.dropbox.com/scl/fi/dgj5rpmmgf8gv9by4hh0y/sensitivityAnalysis4.R?rlkey=t1bcr6os0e59es0fp8uo016o1&dl=0}{here}. Recall that the bounds are arbitrarily sharp by Theorems \ref{the:attainable1} and \ref{the:attainable2}. Recall also that \cite{Penna2022} did not prove that the bounds for the odds ratio and odds difference are arbitrarily sharp. Recall also that the bounds for $M=1$ and $m=0$ coincide with the assumption-free bounds in \cite{Robins1989}.

The rest of this section highlights some differences between the AS and our bounds. Our sensitivity analysis method requires the analyst to describe the association between $U$ and $D$ with two parameters, whereas the AS method requires the analyst to describe the association between $E$ and $U$ with two parameters and the association between $U$ and $D$ with one parameters. Therefore, our method has one parameter less than the AS method. This simplifies the task of the analyst and helps visualization. However, a consequence of not describing the association between $E$ and $U$ is that our bounds always include the null causal effect, i.e., the undescribed association may be so strong as to nullify the causal effect. Note though that, for the risk and odds differences, our intervals are not necessarily centered at the null causal effect, and thus they are informative about both the magnitude and the sign of the true causal effect. Likewise for the risk and odds ratios. The AS bounds do not necessarily include the null causal effect. 

Finally, recall that our sensitivity parameters $M$ and $m$ are bounded as $M^* \leq M \leq 1$ and $0 \leq m \leq m^*$, because they are probabilities. This simplifies the task of the analyst and helps visualization. On the other hand, the AS sensitivity parameters are unbounded, because they are probability ratios. 

\begin{figure}[t]
\centering
\includegraphics[scale=.4]{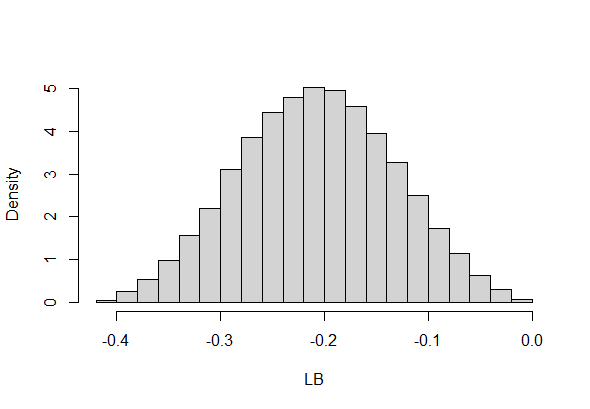}
\includegraphics[scale=.4]{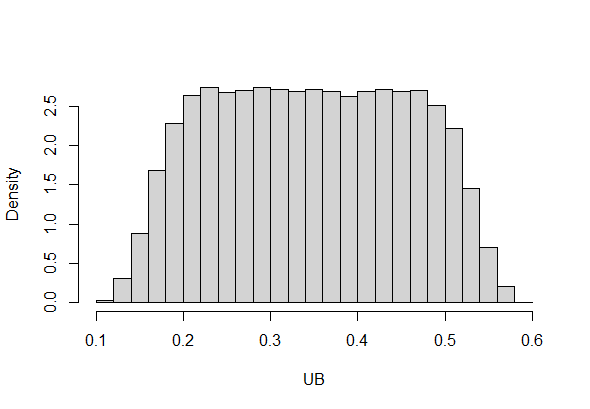}
\caption{Lower and upper bound distributions for the risk difference.}\label{fig:distributions}
\end{figure}

\section{Discussion}

In this note, we have extended our previous JP work on sensitivity analysis for the risk ratio and difference contrasts under unmeasured confounding to any contrast. We have proved that the bounds produced are still arbitrarily sharp. Our bounds are an alternative to the arbitrarily sharp bounds derived in the AS work for any contrast. The two alternatives are based on different sensitivity parameters. We believe that which set of parameters the analyst finds easier to specify may well depend on the domain under study. Therefore, we believe that no alternative is superior to the other.

An extension of our work that we are currently studying is the case where the analyst does not specify the values of the sensitivity parameters but their distribution. For instance, say that the analyst decides that $m$ follows a truncated normal distribution with mean $m^*/2$ and variance $0.1$ in the interval $(0,m^*)$. Likewise, say that the analyst decides that $M$ follows a uniform distribution in the interval $(M^*,1)$. We can now use Mote Carlo simulation to approximate the distributions of the lower and upper bounds for any contrast: Simply, sample a pair of values for $m$ and $M$, compute the bounds, and repeat. Figure \ref{fig:distributions} shows such distributions for the risk difference on the real data in the previous section. We can now use the approximated distributions to approximate the expectations of the bounds or the probabilities that the bounds are smaller or bigger than a given value.

\bibliographystyle{unsrt}
\bibliography{sensitivityAnalysis}

\end{document}